\newcommand{\GF}[1]{{\mathbb F}_{#1}}
\begin{document}
\title{A direct proof of APN-ness of the Kasami functions}
\author{Claude Carlet\inst{1}, Kwang Ho Kim\inst{2,3} \and Sihem Mesnager\inst{4}}

\institute{Department of Mathematics, University of Paris VIII, F-93526 Saint-Denis, Laboratoire de G\'eom\'etrie, Analyse et Applications,  LAGA, University Sorbonne Paris Nord, CNRS, UMR 7539,  F-93430, Villetaneuse, France and  Department of informatics, University of Bergen, Norway.\\
\email{claude.carlet@gmail.com}
\and
Institute of Mathematics, State Academy of Sciences,
Pyongyang, Democratic People's Republic of Korea\\
\email{khk.cryptech@gmail.com} \and PGItech Corp., Pyongyang, Democratic People's Republic of Korea\\ \and Department of Mathematics, University of Paris VIII, F-93526 Saint-Denis, Laboratoire de G\'eom\'etrie, Analyse et Applications,  LAGA, University Sorbonne Paris Nord, CNRS, UMR 7539,  F-93430, Villetaneuse, France, and Telecom
ParisTech,  91120 Palaiseau, France.\\
\email{smesnager@univ-paris8.fr}\\} \maketitle

\begin{abstract}
Using recent results on solving the equation 
$X^{2^k+1}+X+a=0$ over a finite field $\GF{2^n}$, we address an open question raised by
the first author in WAIFI 2014 concerning the APN-ness of the Kasami functions $x\mapsto x^{2^{2k}-2^k+1}$ with $gcd(k,n)=1$, $x\in\GF{2^n}$

\noindent\textbf{Keywords:} APN function $\cdot$ Equation $\cdot$
M\"{u}ller-Cohen-Matthews (MCM) polynomial $\cdot$ Dickson
polynomial $\cdot$ Zeros of a polynomial $\cdot$ Irreducible
polynomial.
\end{abstract}

\noindent\textbf{Keywords:} 
Mathematics Subject Classification. 06E30, 11D04, 12E05, 12E12, 12E20.

\section{Introduction}

Vectorial (multi-output) Boolean functions are functions from the finite field $\mathbb{F}_{2^n}$ (of order $2^n$) to the finite field $\mathbb{F}_{2^m}$, for given positive integers
$n$ and $m$. These functions are called $(n,m)$-functions and
include the (single-output) Boolean functions (which correspond to the case $m
=1$). In symmetric cryptography, multi-output Boolean functions are called \emph{S-boxes}.
They are fundamental parts of block ciphers.
Being the only source of nonlinearity in these ciphers, S-boxes play a
central role in their robustness, by providing confusion (a requirement already mentioned by C. Shannon \cite{CC-Shannon49}), which is necessary to withstand known (and hopefully future) attacks.
 When they are used as S-boxes in block ciphers, the number $m$ of their output bits equals or is less than the number $n$ of input bits, most often. Such functions can also be used in stream ciphers, with $m$ significantly smaller than $n$, in the place of Boolean functions to speed up the ciphers.
 A survey by the first author on vectorial Boolean functions for cryptography and coding theory can be found in \cite{Cbook}. An important class of vectorial functions is that of 
\emph{almost perfect nonlinear} (APN) functions. An $(n,n)$-function $F$ is called  
APN if, for every $a\in \mathbb{F}_{2^n}^*$ and every
$b\in \mathbb{F}_{2^n}$, the equation $F(x)+F(x+a)=b$ has at most 2 solutions, that is, has $0 $ or $2$ solutions. APN functions correspond to optimal objects within other areas of mathematics (e.g. coding theory, combinatorics, and projective geometry), which makes them also interesting objects from a theoretical point of view. The first known APN functions have been power functions $F:x\mapsto x^d$, $x\in\GF{2^n}$. One class of such functions is that of Kasami APN power functions $F:x\mapsto x^{2^{2k}-2^k+1}$ with $gcd(k,n)=1$, $x\in\GF{2^n}$. The proof that Kasami functions are APN is difficult, see \cite{Janwa-Wilson1998,D98}. The first author suggested in  \cite{C14} to find a direct proof of the APN-ness of Kasami functions. This paper provides such a proof. It is structured as follows. In Section, \ref{sec:prelim} we introduce some preliminaries devoted to APN functions. Section \ref{Section4-Carlet} describes the fourth section of \cite{C14} and recalls the exact problem raised by the first author in \cite{C14}. Using the recent advances in solving the equation $X^{2^k+1}+X+a=0$ over finite fields \cite{KM19,KCM19}, we present in Section \ref{Proofs} a direct proof of the APN-ness of Kasami functions.

\section{Preliminaries and notation}\label{sec:prelim}

Let $n$ be a positive integer.  The finite field of order $2^n$ will be denoted by $\GF{2^n}$. In addition, we shall denote by $Tr$ the absolute trace function from
$\GF{2^n}$ to  $\GF{2}$ defined by
$Tr(x)=x+x^2+x^{2^2}+\cdots+x^{2^{n-1}}$.

 Differentially uniform functions are defined as follows.
\begin{definition}\label{defdu} (\cite{CC-Nyberg,CC-Nyberg-bis})
Let $n$ and $m$ be any positive integers (that we shall take in practice such that $m\leq n$) and let $\delta$ be any positive integer. An $(n,m$)-function
$F$ is called \emph{differentially $\delta$-uniform} if, for every
nonzero $a\in \mathbb{F}_{2^n}$
and every $b\in \mathbb{F}_{2^m}$, 
the equation $F(x)+F(x+a)=b$ has at most $\delta$ solutions. The minimum of such value $\delta$ for a given function $F$ is denoted by $\delta_F$ and called the \emph{differential uniformity} of $F$.
\end{definition}
The differential uniformity is necessarily even since the solutions of equation $F(x)+F(x+a)=b$ go by pairs (if $x$ is a solution of $F(x)+F(x+a)=b$ then $x+a$ is also a solution). 

When $F$ is used as an S-box inside a cryptosystem, the differential
uniformity measures its contribution to the resistance against the differential attack. The smaller is $\delta_F$, the better is the resistance.

The differential uniformity 
$\delta_F$ of any $(n,m)$-function $F$ is bounded below by $2^{n-m}$. When the differential uniformity  $\delta_F$ equals $2^{n-m}$, then $F$ is called \emph{perfect nonlinear} (PN). Perfect nonlinear functions can also be called \emph {bent functions}, since equivalently, they achieve the best possible nonlinearity $2^{n-1}-2^{\frac n2-1}$, see \cite{CC-Nyberg}.  It is well-known  that perfect nonlinear $(n,n)$-functions do not exist (precisely, they exist if and only if $n$ is even and $m\leq \frac n2$); but they do exist in other characteristics than 2 (see {\em e.g.} \cite{CC-C-D}); they are then often called \emph{planar} functions (instead of "perfect nonlinear").
\begin{definition}\label{CC-def5}(\cite{CC-BD94,CC-Nyberg-ter,CC-NK})
An $(n,n)$-function $F$ is called  
\emph{almost perfect nonlinear} (APN) if it is differentially $2$-uniform, that is, if for every $a\in {\mathbb{F}_2^n}^*$ and every
$b\in \mathbb{F}_2^n$, the equation $F(x)+F(x+a)=b$ has 0 or 2 solutions.
\end{definition}
Since $(n,m)$-functions have differential uniformity at least $2^{n-m}$ when $m\leq n/2$ ($n$ even) and strictly larger when $n$ is odd or $m>n/2$, we shall use the term of APN function only when $m=n$.  In this paper we are only dealing with APN functions. The first known APN functions have been power functions $F:x\mapsto x^d$, $x\in\GF{2^n}$. When $F$ is APN, the exponent $d$ is said to be an \emph{APN exponent}.
We present in Table \ref{Table1}, the known APN exponents up to equivalence (given any $n$, two exponents are said equivalent if they are in the same cyclotomic class of 2 modulo $2^n-1$) and up to inversion (for $n$ odd, since it is known, see e.g. \cite{Cbook}, that APN exponents are invertible modulo $2^n-1$ if and only if $n$ is odd).

\begin{table}[ht]
\begin{center} 
\caption{Known APN exponents up to equivalence (any $n$) and up to inversion ($n$ odd)}\label{Table1}
\begin{tabular}{ccc} \hline
Functions &  Exponents $d$ &  Conditions\\ \hline  
Gold   &  $2^i+1$  & $gcd(i,n)=1$ \\  
Kasami   &  $2^{2i}-2^i+1$ & $gcd(i,n)=1$ \\ 
Welch        & $ 2^t+3$  &  $n=2t+1$\\
Niho & $2^t+2^{t/2}-1$, $t$ even  & $n=2t+1$\\
 & $2^t+2^{(3t+1)/2}-1$, $t$ odd & \\
Inverse & $2^{2t}-1$& $n=2t+1$\\
Dobbertin & $2^{4t}+2^{3t}+2^{2t}+2^t-1$ & $n=5t$ \\ \hline 
\end{tabular}
\end{center} 
\end{table} 

In this paper we focus on Kasami APN functions (see \cite{Janwa-Wilson1998} and also \cite{D98}). The proof that such function is APN is difficult. The first author suggested in   \cite{C14} to find a a direct proof of APN-ness of the Kasami functions.  

\section{Description of the open question raised  by C. Carlet in \cite{C14}}\label{Section4-Carlet}
\subsection{Recall of the content of Section 4.4 in \cite{C14}}
Section 4.4 of \cite{C14} is entitled ``Find a direct proof of APN-ness of the Kasami functions in even dimension
which would use the relationship between these functions and the
Gold functions". It recalls that the proof by Hans Dobbertin in \cite{D99} of the
fact that Kasami functions $F(x) = x^{2^{2k}-2^k+1}$, where $\gcd(i,
n) = 1$, are AB (and therefore APN) for $n$ odd uses that these
functions are the (commutative) composition of a Gold function and
of the inverse of another Gold function. This proof is particularly
simple. The direct proofs in \cite{Janwa-Wilson1998} and \cite{D98} that the Kasami functions
above are APN for $n$ even are harder, as well as the determination
in \cite{DD04} (Theorem 11) of their Walsh spectrum, which also
allows to prove their APN-ness, and which uses a similar but
slightly more complex relation to the Gold functions when $n$ is not
divisible by 6. It is then written in \cite{C14} that it would be interesting to see if, for $n$ odd and
for $n$ even, these relations between the Kasami functions and the
Gold functions can lead to alternative direct proofs,
\emph{hopefully simpler}, of the APN-ness of Kasami functions.

 Since the Kasami function is a power
function, it is APN if and only if, for every $b\in \GF{2^n}$ the
system
\begin{equation}\label{eq0}
    \left\{
    \begin{array}{ll}
    X+Y&=1\\
    F(X)+F(Y)&=b
    \end{array}\right.
\end{equation}
has at most one pair $\{X, Y \}$ of solutions in $\GF{2^n}$.
\\

\noindent $\bullet$ For $n$ odd, $2^k + 1$ is coprime with
$2^n-1$ and $F(x) = G_2\circ G_1^{-1}(x)$, where $G_1(x)$ and
$G_2(x)$ are respectively the Gold functions $x^{2^k+1}$ and
$x^{2^{3k}+1}$. Hence, $F$ is APN if and only if the system
\begin{equation}\label{eq'}
    \left\{
    \begin{array}{ll}
    x^{2^k+1}+y^{2^k+1}&=1\\
    x^{2^{3k}+1}+y^{2^{3k}+1}&=b
    \end{array}\right.
\end{equation}
has at most one pair $\{x, y\}$ of solutions. Let $y = x + z$. Then
$z \neq 0$. The system \eqref{eq'} is equivalent to:
\begin{equation}
    \left\{
    \begin{array}{ll}
    \left(\frac{x}{z}\right)^{2^k}+\left(\frac{x}{z}\right)&=\frac{1}{z^{2^k+1}}+1\\
    \left(\frac{x}{z}\right)^{2^{3k}}+\left(\frac{x}{z}\right)&=\frac{b}{z^{2^{3k}+1}}+1
    \end{array}\right.
\end{equation}
or equivalently
\begin{equation}
    \left\{
    \begin{array}{ll}
    \left(\frac{x}{z}\right)^{2^k}+\left(\frac{x}{z}\right)=\frac{1}{z^{2^k+1}}+1\\
    \frac{1}{z^{2^k+1}}+1+\left(\frac{1}{z^{2^k+1}}+1\right)^{2^k}+\left(\frac{1}{z^{2^k+1}}+1\right)^{2^{2k}}=\frac{b}{z^{2^{3k}+1}}+1
    \end{array}\right.
\end{equation}
that is, by simplifying and multiplying the second equation by
$z^{2^{3k}+2^{2k}}$:
\begin{equation}
    \left\{
    \begin{array}{ll}
    \left(\frac{x}{z}\right)^{2^k}+\left(\frac{x}{z}\right)=\frac{1}{z^{2^k+1}}+1\\
   z^{2^{3k}+2^{2k}-2^k-1}+z^{2^{3k}-2^k}+1=bz^{2^{2k}-1}
    \end{array}\right.
\end{equation}
that is, denoting $v = z^{2^{2k}-1}$ and $c = b + 1$:
\begin{equation}
    \left\{
    \begin{array}{ll}
    \left(\frac{x}{z}\right)^{2^k}+\left(\frac{x}{z}\right)&=\frac{1}{v^{\frac{1}{2^k-1}}}+1\\
   (v+1)^{2^k+1}+cv&=0
    \end{array}\right.
\end{equation}
Proving that $F$ is APN is equivalent to proving that, for every
$c\in \GF{2^n}$, the second equation can be satisfied by at most one
value of $v$ such that the first equation can admit solutions, i.e.
such that $Tr \left(\frac{1}{z^{2^k+1}}+1\right)=0$. It is recalled in \cite{C14} that Reference
\cite{HK10} studies the equation $x^{2^k+1} + c (x + 1) = 0$, but observed that this does not seem to allow completing a proof. Then is stated the:\\

\noindent \textbf{Open Question 1:} For $n$ odd, is it possible to complete
this proof?\\

\noindent $\bullet$ 
For $n$ even, note first that System~\eqref{eq0} has a solution such that $X = 0$ or $Y = 0$ if
and only if b = 1. We restrict now ourselves to the case where $n$
is not divisible by 6. Then $(\frac{2^n-1}{3},3)=1$ and every
element $X$ of $\GF{2^n}^*$ can be written (in 3 different ways) in
the form $\omega x^{2^k+1}, \omega \in \GF{4}^*,  x \in \GF{2^n}^*$.
Indeed, the function $x\longmapsto  x^{2^k+1}$ is 3-to-1 from
$\GF{2^n}^*$ to the set of cubes of $\GF{2^n}^*$,  and every
integer $i$ is, by the B\'{e}zout theorem, the linear combination
over $\mathbb{Z}$ of $\frac{2^n-1}{3}$ and 3; the element $\alpha^i$
of $\GF{2^n}^*$ (where  $\alpha$ is primitive) is then the product of a
power of $\alpha^{\frac{2^n-1}{3}}$ and of a power of $\alpha^3$.
Note that $2^{2k}-2^k+1=(2^k+1)^2-3\cdot 2^k$ is divisible by 3. So, $F$ is APN if and only if the system
\begin{equation}\label{eq'_even}
    \left\{
    \begin{array}{ll}
    \omega x^{2^k+1}+\omega' y^{2^k+1}&=1\\
    x^{2^{3k}+1}+y^{2^{3k}+1}&=b
    \end{array}\right.,
\end{equation}where
$\omega, \omega' \in \GF{4}^*$ and  $x,y \in \GF{2^n}^*$, has no
solution for $b = 1$ and has at most one pair $\{\omega x^{2^k+1},
\omega' y^{2^k+1}\}$ of solutions for every $b \neq 1$. We consider
the case $x^{2^k+1}=y^{2^k+1}$ (and $\omega \neq \omega'$) apart. In
this case, the first equation $(\omega + \omega') x^{2^k+1} = 1$ is
equivalent to $x\in \GF{4}^*$ and $\omega + \omega'=1$. Then because
of the second equation, for $b = 0$, we have
two solutions such that $x^{2^k+1}=y^{2^k+1}$ (since $\omega$ and
$\omega'$ are nonzero) and for $b\neq 0$ we have none. Hence, $F$ is
APN if and only if the system
\begin{equation}\label{eq'_even'}
    \left\{
    \begin{array}{ll}
    \omega x^{2^k+1}+\omega' y^{2^k+1}&=1\\
    x^{2^{3k}+1}+y^{2^{3k}+1}&=b
    \end{array}\right.
\end{equation}
where $\omega, \omega' \in \GF{4}^*$ and  $x,y \in \GF{2^n}^*$ are
such that $x^{2^k+1}\neq y^{2^k+1}$, has no solution for $b \in
\GF{2}$ and has at most one pair $\{\omega x^{2^k+1}, \omega'
y^{2^k+1}\}$ of solutions for every $b \notin \GF{2}$. Since
$x^{2^k+1}\neq y^{2^k+1}$, we can as above denote $y = x + z$ where
$z \neq 0$, $v = z^{2^{2k}-1}$ and $c = b + 1$, and we obtain the
system:

\begin{equation}\label{eq_e_0}
    \left\{
    \begin{array}{ll}
    (\omega+\omega')\left(\frac{x}{z}\right)^{2^k+1}+\omega'\left(\frac{x}{z}\right)^{2^k}+\omega'\left(\frac{x}{z}\right)&=\frac{1}{z^{2^k+1}}+\omega'\\
   \left(\frac{x}{z}\right)^{2^{3k}}+\left(\frac{x}{z}\right)&=\frac{b}{z^{2^{3k}+1}}+1
    \end{array}\right.
\end{equation}
where $z\neq 0, v\neq 0$. 

\begin{remark} System~\eqref{eq_e_0} is slightly different from the system obtained in \cite{C14}; it is better adapted to finding a direct proof of the APN-ness of Kasami functions.\end{remark}

As in the case of $n$ odd, it is written in \cite{C14} that the results of  \cite{HK10} do not seem to allow completing a direct proof of APN-ness. Then is stated the:\\

\noindent \textbf{Open Question 2:} For $n$ even not divisible by 6, is it
possible to complete this proof?\\
\textbf{Open Question 3:} For $n$ divisible by 6, is it possible to
adapt the method?\\

\section{Proofs of APN-ness of
Kasami functions}\label{Proofs}
In this section, we complete the direct proofs of the APN-ness of
Kasami functions, for $n$ odd and for $n$ even.

 Let
$q=2^k$. We will use the following result.
\begin{lemma} (Lemma 7 of \cite{KCM19})\label{lem}
Let $(n,k)=1$. The equation $X^{q+1}+X+a=0$ has only 0, 1 or 3
solutions in  $\GF{2^n}$. If the equation $X^{q+1}+X+a=0$ has three
solutions in $\GF{2^n}$, then there exists an $u\in
\GF{2^n}\setminus\GF{2^2}$ such that
$a=\frac{(u+u^q)^{q^2+1}}{(u+u^{q^2})^{q+1}}$. Furthermore, in that
case the three solutions are $x_1=\frac{1}{1+(u+u^q)^{q-1}}$,
$x_2=\frac{u^{q^2-q}}{1+(u+u^q)^{q-1}}$ and
$x_3=\frac{(u+1)^{q^2-q}}{1+(u+u^q)^{q-1}}$.
\end{lemma}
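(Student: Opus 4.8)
The plan is to treat the count of solutions and the three-solution parametrisation separately. Write $q=2^k$. The hypothesis $\gcd(k,n)=1$ enters only through two facts: $x\mapsto x^q$ generates $\mathrm{Gal}(\GF{2^n}/\GF{2})$, so $x^q=x$ forces $x\in\GF{2}$; and $\gcd(q-1,2^n-1)=1$, so $x\mapsto x^{q-1}$ permutes $\GF{2^n}^\star$ (hence every element of $\GF{2^n}^\star$ is a $(q-1)$-st power, and the inverse map $\beta\mapsto\beta^{1/(q-1)}$, exponent taken modulo $2^n-1$, is well defined). Since $a=0$ gives $X(X^q+1)=0$ with root set $\{0,1\}$, assume $a\ne0$; then $0$ is not a root, and a root equal to $1$ would force $a=0$, so every root $x_0$ lies outside $\GF{2}$. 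Fixing such an $x_0$ and putting $X=x_0+z$, the equation becomes $z^{q+1}+x_0z^q+(x_0^q+1)z=0$ (using $x_0^{q+1}+x_0+a=0$), so the roots $X\ne x_0$ are the elements $x_0+z$ with $z\ne0$ and $z^q+x_0z^{q-1}+x_0^q+1=0$; the substitution $t=1/z$ rewrites the latter as $L(t)=1$, where $L(t):=(x_0^q+1)t^q+x_0t$ is $\GF{2}$-linear with both coefficients nonzero. Then $\ker L=\{0\}\cup\{t\ne0:\ t^{q-1}=x_0/(x_0^q+1)\}$ has exactly two elements, so $L(t)=1$ has either $0$ solutions or a full coset of $\ker L$, i.e. exactly $2$, and none equals $0$. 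Translating back, the equation has $1$ or $3$ roots once it has one; this proves the first assertion and in particular rules out exactly two roots.

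Now suppose there are three roots; from the above, fixing one as $x_0$, the other two are $x_0+1/t_1$ and $x_0+1/(t_1+t_0)$, where $t_1$ is a solution of $L(t)=1$ and $t_0$ is the nonzero element of $\ker L$. It remains to exhibit a $u\in\GF{2^n}\setminus\GF{2^2}$ putting the three roots into the displayed form, which I would do by inverting the parametrisation. In the target form one has $x_2/x_1=(u^q)^{q-1}$ and $x_3/x_1=(u^q+1)^{q-1}$ for a suitable labelling $x_1,x_2,x_3$ of the roots; since $\beta\mapsto\beta^{q-1}$ is a bijection, $u^q$ is recovered as $(x_2/x_1)^{1/(q-1)}$, hence $u$ by the Frobenius bijection, and a direct computation identifies the correct labelling (equivalently, singles out the normalised root $x_1=\frac{u+u^q}{u+u^{q^2}}$) and checks the compatibility $(x_3/x_1)^{1/(q-1)}+(x_2/x_1)^{1/(q-1)}=1$, i.e. that the three ratios to $x_1$ are $\{1,u^q,u^q+1\}$ raised to the power $q-1$. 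The conditions $u+u^q\ne0$ and $u+u^{q^2}\ne0$ — equivalently $u\notin\GF{2^2}$ — are then forced by $x_1$ being a well-defined nonzero element and by the three roots being distinct, and substituting $u$ into $x_1^{q+1}+x_1+a=0$ with $x_1=\frac{u+u^q}{u+u^{q^2}}$ yields $a=\frac{(u+u^q)^{q^2+1}}{(u+u^{q^2})^{q+1}}$; the stated $x_2,x_3$ are then $\beta^{q-1}x_1$ for $\beta=u^q$ and $\beta=u^q+1$.

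For the converse it is enough to substitute: given $u\in\GF{2^n}\setminus\GF{2^2}$, set $w=u+u^q$ and $D=u+u^{q^2}=w+w^q$; inserting $X=\beta^{q-1}w/D$ and $a=w^{q^2+1}/D^{q+1}$ into $X^{q+1}+X+a$, clearing $D^{q+1}$ and dividing by $w^{q^2}$, collapses the identity to $w^{q-q^2}\beta^{q-1}(\beta^{q^2-q}+1)+\beta^{q-1}+1=0$, which is trivial for $\beta=1$ and, for $\beta\in\{u^q,u^q+1\}$, reduces to a short identity after expanding $w$ (the case $\beta=u^q+1$ following from the case $\beta=u^q$ applied to $u+1$, since $w$ and $D$ are invariant under $u\mapsto u+1$); distinctness of the three values and $D\ne0$ are again exactly what $u\notin\GF{2^2}$ provides. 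I expect the crux to be the recovery of $u$ in the three-solution case: one must single out the normalised root, invert $\beta\mapsto\beta^{q-1}$ and the Frobenius, and reduce the several compatibility relations among the three roots to the single clean condition $u\notin\GF{2^2}$ — whereas the count itself is short once one hits on the linearising substitution $t=1/z$.
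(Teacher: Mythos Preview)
Your route differs substantially from the paper's. The paper does not give a self-contained argument: it cites \cite{HK10,KM19} for the $0/1/3$ solution count, cites Propositions~1 and~5 of \cite{HK10} for the existence of $u$ with the stated expression for $a$, and only checks by direct substitution that the three displayed $x_i$ are solutions. You instead aim to build everything from scratch. Your treatment of the solution count via $X=x_0+z$, $t=1/z$, and the $\GF{2}$-linear map $L(t)=(x_0^q+1)t^q+x_0t$ is correct and pleasant (with the side remark that, as you yourself note, $a=0$ gives exactly two roots $\{0,1\}$, so the first assertion tacitly needs $a\ne0$). The final verification paragraph, reducing to a substitution check with the symmetry $u\mapsto u+1$, is in the same spirit as the paper's.

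Where your proposal has a genuine gap is the recovery of $u$ in the three-solution case. Setting $u^q:=(x_2/x_1)^{1/(q-1)}$ produces some $u\in\GF{2^n}$, but this uses only the ratio $x_2/x_1$; it does not yet force $x_1(u)$ to equal the chosen root $x_1$, nor the compatibility $(x_3/x_1)^{1/(q-1)}=u^q+1$, nor $a=a(u)$. To close these you must actually use that the three elements satisfy $r_i^{q+1}+r_i+a=0$, and that step---which is precisely what Propositions~1 and~5 of \cite{HK10} supply---has not entered your recovery argument. You do have the right raw material: from your first paragraph the three roots are $x_0$, $x_0+1/t_1$, $x_0+1/(t_1+t_0)$ with $t_0^{q-1}=x_0/(x_0^q+1)$ and $(x_0^q+1)t_1^q+x_0t_1=1$, and in principle one can plug these in and verify the compatibility identity $\alpha+\beta=1$ directly. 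But this computation is not short, and it is the substantive content of the lemma (equivalently, the surjectivity of $u\mapsto a(u)$ onto the three-root locus), not a bookkeeping detail. You correctly flag it as the crux; as written, however, the existence of $u$ is asserted rather than proved.
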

\begin{proof}
The fact the $X^{q+1}+X+a=0$ has only 0, 1 or 3 solutions in
$\GF{2^n}$ is well known (see e.g. \cite{HK10,KM19}). The fact that
if $X^{q+1}+X+a=0$ has three solutions in $\GF{2^n}$, then there
exists an $u\in \GF{2^n}\setminus\GF{2}$ such that
$a=\frac{(u+u^q)^{q^2+1}}{(u+u^{q^2})^{q+1}}$, is a direct
consequence of Proposition 5 and Proposition 1 in \cite{HK10}.

For $a=\frac{(u+u^q)^{q^2+1}}{(u+u^{q^2})^{q+1}}$, $u\notin \GF{2}$,
the fact that $x_1=\frac{1}{1+(u+u^q)^{q-1}}$,
$x_2=\frac{u^{q^2-q}}{1+(u+u^q)^{q-1}}$ and
$x_3=\frac{(u+1)^{q^2-q}}{1+(u+u^q)^{q-1}}$ are different solutions
to $X^{q+1}+X+a=0$ can be checked by straightforward
substitution.\qed
\end{proof}
\subsection{Case when $n$ is odd}
 Since $(n,k)=1$ and $n$ is odd, it holds that
$(q-1,2^n-1)=(q+1,2^n-1)=1$. Carlet's question can be restated
as: Prove that for every $c\in \GF{2^n}$ the following system of
equations:
\begin{equation}\label{eq}
    \left\{
    \begin{array}{ll}
    Tr\left(\frac{1}{v^{\frac{1}{q-1}}}\right)=1\\
   (v+1)^{q+1}+cv&=0
    \end{array}\right.
\end{equation}
has at most one $\GF{2^n}-$solution.
\begin{proof}
If $c=0$, then the statement is right as evidently
Equation~\eqref{eq} has the unique solution 1.
Let us then assume $c\neq0$.
 By the variable substitution
$v=c^{1/q}V+1$, the second equation becomes $V^{q+1}+V+c^{-1/q}=0$.
By Lemma~\ref{lem}, we know $V^{q+1}+V+c^{-1/q}=0$  has 0, 1 or 3
$\GF{2^n}-$solutions for any  $c\in \GF{2^n}$. If this equation has
at most one solution, then Equation~\eqref{eq} also has at most one
solution.

Let us assume that this equation has 3 solutions in  $\GF{2^n}$.
Then, by Lemma~\ref{lem} there exists an $u\in
\GF{2^n}\setminus\GF{2}$ such that
$c^{-1/q}=\frac{(u+u^q)^{q^2+1}}{(u+u^{q^2})^{q+1}}$. Furthermore,
these three solutions are:  $V_1=\frac{1}{1+(u+u^q)^{q-1}}$,
$V_2=\frac{u^{q^2-q}}{1+(u+u^q)^{q-1}}$ and
$V_3=\frac{(u+1)^{q^2-q}}{1+(u+u^q)^{q-1}}$.

Thus, the three solutions to $(v+1)^{q+1}+cv=0$ are the following:
\begin{itemize}
\item
$v_1=c^{1/q}V_1+1=\frac{(u+u^{q^2})^{q+1}}{(u+u^q)^{q^2+1}}\cdot\frac{1}{1+(u+u^q)^{q-1}}
+1=\frac{(u+u^{q^2})^{q+1}}{(u+u^q)^{q^2+1}}\cdot\frac{u+u^q}{u+u^{q^2}}
+1=
\frac{(u+u^{q^2})^{q}}{(u+u^q)^{q^2}}+1=\frac{1}{(u+u^q)^{q^2-q}}$.

\item $v_2=\frac{u^{q^2}(u+u^{q^2})^{q}}{u^q(u+u^q)^{q^2}}+1=\frac{u^{q^2}(u+u^{q^2})^{q}+u^q(u+u^q)^{q^2}}{u^q(u+u^q)^{q^2}}=\frac{u^{q^3}(u+u^q)^q}{u^q(u+u^q)^{q^2}}=\frac{u^{q^3-q}}{(u+u^q)^{q^2-q}}$.

\item $v_3=\frac{(u+1)^{q^2}(u+u^{q^2})^{q}}{(u+1)^q(u+u^q)^{q^2}}+1=\frac{(u+1)^{q^2}(u+u^{q^2})^{q}+(u+1)^q(u+u^q)^{q^2}}{(u+1)^q(u+u^q)^{q^2}}=\frac{(u+1)^{q^3}(u+u^q)^q}{(u+1)^q(u+u^q)^{q^2}}=\frac{(u+1)^{q^3-q}}{(u+u^q)^{q^2-q}}$.
\end{itemize}

But, we have
$Tr\left(\frac{1}{v_1^{\frac{1}{q-1}}}\right)=Tr\left(\frac{1}{v_2^{\frac{1}{q-1}}}\right)=Tr\left(\frac{1}{v_3^{\frac{1}{q-1}}}\right)=0$
since
\begin{itemize}
\item $\frac{1}{v_1^{\frac{1}{q-1}}}=u^q+u^{q^2}$;

\item
$\frac{1}{v_2^{\frac{1}{q-1}}}=\frac{(u+u^q)^{q}}{u^{q(q+1)}}=\frac{1}{u^q}+\frac{1}{u^{q^2}}$;

 \item $\frac{1}{v_3^{\frac{1}{q-1}}}=\frac{(u+u^q)^{q}}{(u+1)^{q(q+1)}}=\frac{(u+1)^q+(u+1)^{q^2}}{(u+1)^{q(q+1)}}=\frac{1}{(u+1)^{q}}+\frac{1}{(u+1)^{q^2}}$.

\end{itemize}

Hence, Equation~\eqref{eq} has no $\GF{2^n}-$solution in this
case.\qed

\subsection{Case when $n$ is even}

\subsubsection{A simplest direct proof}: M$\ddot{u}$ller-Cohen-Matthews polynomials are defined as follows:
\begin{equation*}
  f_{k,2^k+1}(X) := \frac{{T_k(X)}^{2^k+1}}{X^{2^k}}
\end{equation*}
where $T_k(X) := \sum_{i=0}^{k-1}X^{2^i}$. The following fact is
well-known.
\begin{lemma}\cite{CM94,KM19}
  If $(n,k)=1$ and $k$ is odd, then $f_{k,2^{k}+1}$ is a permutation
    on $\GF{2^n}$.
\end{lemma}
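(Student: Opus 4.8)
Write $q=2^k$. I would first observe that, under the hypotheses, $T_k$ is a \emph{bijection} of $\GF{2^n}$. Indeed, the telescoping identity $T_k(x)^2+T_k(x)=x^q+x$ shows that $T_k(x)=0$ forces $x^q+x=0$, hence $x\in\GF{2^{\gcd(k,n)}}=\GF 2$; and $T_k(1)=k\bmod 2=1$ because $k$ is odd, so $x=0$. Consequently $f:=f_{k,q+1}$ vanishes only at $0$, so it restricts to a self-map of the finite set $\GF{2^n}^*$, and the fibre sizes of this restriction sum to $2^n-1$ over the $2^n-1$ values $c\in\GF{2^n}^*$. It therefore suffices to prove that this restriction is surjective (equivalently, injective): that for every $c\in\GF{2^n}^*$ the equation $f(x)=c$ has a nonzero solution.

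The engine of the argument is the substitution $t=T_k(x)$. If $f(x)=c$ with $c\ne 0$, then $x^q=t^{q+1}/c$ by the definition of $f$, and $x^q+x=t^2+t$ by the identity above, whence $x=t^{q+1}/c+t^2+t$; substituting this back into $T_k(x)=t$ and using the identity once more in the form $T_k(z^2+z)=z^q+z$ collapses everything to the single equation
\[
T_k\!\left(\frac{t^{q+1}}{c}\right)=t^q ,
\]
whose nonzero solutions $t$ correspond bijectively to the nonzero solutions $x$ of $f(x)=c$. I would then apply the Artin--Schreier operator $z\mapsto z^2+z$ to this equation and multiply through by $c^q$; this turns it --- \emph{together with} the ``shadow'' equation $T_k(t^{q+1}/c)=t^q+1$ --- into the affine equation $L(t)=c^q$, where $L(t):=t^{q^2}+c^q t^q+c^{q-1}t$ is $\GF 2$-linear, so that its solution set is empty or a coset of $\ker L$. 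Finally one computes $\ker L$: a nonzero $t$ lies in $\ker L$ exactly when $r:=t^{q-1}$ satisfies $r^{q+1}+c^q r+c^{q-1}=0$, and since $\gcd(q-1,2^n-1)=2^{\gcd(k,n)}-1=1$ the map $t\mapsto t^{q-1}$ is a bijection of $\GF{2^n}^*$; the substitution $r=cR$ rewrites this as $R^{q+1}+R+c^{-2}=0$, which by Lemma~\ref{lem} has $N\in\{0,1,3\}$ solutions. Hence $|\ker L|=1+N\in\{1,2,4\}$, consistently with its being an $\GF 2$-subspace.

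The hard part will be to upgrade this bookkeeping into the exact value $\#\{x\ne 0:f(x)=c\}=1$. One must (i) rule out that $L(t)=c^q$ has no solution --- i.e.\ show that $c^q$ lies in the image of $L$ --- and (ii) show that, among the resulting $1+N$ solutions of $L(t)=c^q$, exactly one satisfies the genuine equation $T_k(t^{q+1}/c)=t^q$, the remaining $N$ satisfying its shadow $T_k(t^{q+1}/c)=t^q+1$. The natural tools here are the identity $Tr\big(T_k(z)\big)=Tr(z)$ (again valid because $k$ is odd), which constrains the value of $Tr(t^{q+1}/c+t)$ on the solution set of each of the two equations, together with the explicit description of the three roots in the case $N=3$ of Lemma~\ref{lem}. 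Carrying (i) and (ii) through is precisely the non-trivial content of the classical M\"{u}ller--Cohen--Matthews theorem; in the write-up I would either complete this analysis via Lemma~\ref{lem}, or simply invoke \cite{CM94,KM19}.\qed
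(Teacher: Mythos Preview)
The paper does not prove this lemma at all: it is stated with citations to \cite{CM94,KM19}, and the only further comment is that Dillon--Dobbertin \cite{DD04} give a concise proof via a classical result of Dickson on Dickson polynomials. So there is no ``paper's own proof'' to compare against beyond a bare reference.

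Your proposal, by contrast, sets up substantially more: the bijectivity of $T_k$, the reduction via $t=T_k(x)$ to $T_k(t^{q+1}/c)=t^q$, the passage to the $\GF 2$-linear equation $L(t)=c^q$, and the kernel analysis through Lemma~\ref{lem}. All of this is correct as far as it goes. But it is not a proof: you yourself identify the two decisive steps --- (i) that $c^q$ lies in the image of $L$, and (ii) that among the $1+N$ solutions of $L(t)=c^q$ exactly one satisfies the genuine equation rather than its shadow --- and then you do not carry them out. Your final sentence offers to ``either complete this analysis via Lemma~\ref{lem}, or simply invoke \cite{CM94,KM19}''. The second option is exactly what the paper already does, so taking it makes the preceding three paragraphs an elaborate prelude to the same citation; and the first option is where the real work lies and is not done.

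In short: there is no disagreement in approach, because the paper offers none; there \emph{is} a genuine gap in your proposal, namely steps (i) and (ii), which you correctly flag as ``precisely the non-trivial content'' of the theorem. If you want this to stand as an independent proof you must actually execute those steps (the trace identity $Tr(T_k(z))=Tr(z)$ and the explicit parametrisation of the three roots in Lemma~\ref{lem} are indeed the right ingredients, but the verification --- especially the case split $N\in\{0,1,3\}$ and the surjectivity of $L$ --- still has to be written out). Otherwise, what you have is a promising sketch that ultimately defers to the same literature the paper cites.
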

A very concise proof of this fact is also given by Section 6 in
\cite{DD04}, by using a classical result (by Dickson in 1896) on
Dickson polynomials.

 Now, equality $F(X)+F(X+1)+1=f_{k,2^k+1}(X+X^2)$ can be
checked by direct calculation. If $n$ is even, then $k$ is odd as
$(n,k)=1$. So $F(X)+F(X+1)$ is 2-to-1 by above Lemma, i.e., the
Kasami functions are APN.\qed

\subsubsection{Continuing the discussion:}
While a very simple direct proof for $n$ even already exists as
presented above, here we will try to continue the
discussion from Section 1.
One should keep in mind the following facts:
\begin{enumerate}
\item When $n$ is divisible by 4,
$Tr(\omega)=\omega+\omega^2+\cdots+\omega^{2^{n-1}}=0$ for each
$\omega \in \GF{4}^*$.
\item When
$n$ is even not divisible by 4, $Tr(\omega)=1$ , for each $\omega
\in \GF{4}\setminus \GF{2}$ and $Tr(1)=0$.
\item Since $k$ is odd, it
holds that $\omega^{q-1}=\omega=\omega^{\frac{1}{q-1}}$,
$\omega^{q}=\omega^2$, $\omega^{q+1}=1$, $\omega^{q^2}=\omega$ for
each $\omega \in \GF{4}^*$.
\end{enumerate}

Now, let us assume $\omega = \omega'$ i.e. $\omega + \omega'=0$. The
system \eqref{eq_e_0} is equivalent to:
\begin{equation*}
    \left\{
    \begin{array}{ll}
    \left(\frac{x}{z}\right)^{q}+\left(\frac{x}{z}\right)&=\frac{1}{\omega'z^{q+1}}+1\\
    \left(\frac{x}{z}\right)^{q^3}+\left(\frac{x}{z}\right)&=\frac{b}{z^{q^3+1}}+1
    \end{array}\right.
\end{equation*}
or equivalently with $\varpi=\frac{1}{\omega'}$
\begin{equation*}
    \left\{
    \begin{array}{ll}
    \left(\frac{x}{z}\right)^{q}+\left(\frac{x}{z}\right)=\frac{\varpi}{z^{q+1}}+1\\
    \frac{\varpi}{z^{q+1}}+1+\left(\frac{\varpi}{z^{q+1}}+1\right)^{q}+\left(\frac{\varpi}{z^{q+1}}+1\right)^{q^2}=\frac{b}{z^{q^3+1}}+1.
    \end{array}\right.
\end{equation*}
 By simplifying and multiplying the second equation by
$z^{q^3+q^2}$,
\begin{equation*}
    \left\{
    \begin{array}{ll}
    \left(\frac{x}{z}\right)^{q}+\left(\frac{x}{z}\right)=\frac{\varpi}{z^{q+1}}+1\\
    z^{q^3+q^2-q-1}+\varpi z^{q^3-q}+1=b\varpi^{2}z^{q^2-1}
    \end{array}\right.
\end{equation*}
that is, denoting $v =\varpi^2 z^{q^2-1}$ and $c = b + 1$:
\begin{equation*}
    \left\{
    \begin{array}{ll}
    \left(\frac{x}{z}\right)^{q}+\left(\frac{x}{z}\right)&=\frac{1}{v^{\frac{1}{q-1}}}+1\\
   (v+1)^{q+1}+cv&=0.
    \end{array}\right.
\end{equation*}
\end{proof}

Let $\varepsilon$ be such that $1=\varepsilon+\varepsilon^2$ that
is, $\varepsilon\in \GF{4}\setminus \GF{2}$. Then, one has
$\varepsilon^q+\varepsilon=1$ and
$\varepsilon^q=\frac{1}{\varepsilon} $.

By the same arguments as in Section 2.1, when $(v+1)^{q+1}+cv=0$ has
three solutions, there exists an $u\in \GF{2^n}\setminus \GF{2^2}$
such that $\frac{1}{v^{\frac{1}{q-1}}}\in \{u+u^q,
\frac{1}{u}+\frac{1}{u^q}, \frac{1}{u+1}+\frac{1}{(u+1)^q}\}$. Let
us define $S:=\{\frac{\varpi}{u+u^q}(u+\varepsilon)^{q+1},
\frac{\varpi}{u+u^q}(u+1+\varepsilon)^{q+1}\}$.
\begin{itemize}
\item If $\frac{1}{v^{\frac{1}{q-1}}}= u+u^q$, then $z^{q+1}=\varpi
v^{\frac{1}{q-1}}=\frac{\varpi}{u+u^q}$ and $x^{q+1}\in
\{z^{q+1}(u+\varepsilon)^{q+1}, z^{q+1}(u+1+\varepsilon)^{q+1}\}=S$.

\item If $\frac{1}{v^{\frac{1}{q-1}}}= \frac{1}{u}+\frac{1}{u^q}$,
then $z^{q+1}=\varpi v^{\frac{1}{q-1}}= \frac{\varpi
u^{q+1}}{u+u^q}$ and $x^{q+1}\in
\{z^{q+1}(\frac{1}{u}+\varepsilon)^{q+1},
z^{q+1}(\frac{1}{u}+1+\varepsilon)^{q+1}\}=\{\frac{\varpi
u^{q+1}}{u+u^q}(\frac{1}{u}+\varepsilon)^{q+1}, \frac{\varpi
u^{q+1}}{u+u^q}(\frac{1}{u}+1+\varepsilon)^{q+1}\}=S$.

\item If $\frac{1}{v^{\frac{1}{q-1}}}=
\frac{1}{u+1}+\frac{1}{(u+1)^q}$, then $z^{q+1}=\varpi
v^{\frac{1}{q-1}}= \frac{\varpi (u+1)^{q+1}}{u+u^q}\}$ and
$x^{q+1}\in \{z^{q+1}(\frac{1}{u+1}+\varepsilon)^{q+1},
z^{q+1}(\frac{1}{u+1}+1+\varepsilon)^{q+1}\}=\{\frac{\varpi
(u+1)^{q+1}}{u+u^q}(\frac{1}{u+1}+\varepsilon)^{q+1}, \frac{\varpi
(u+1)^{q+1}}{u+u^q}(\frac{1}{u+1}+1+\varepsilon)^{q+1}\}=S$.
\end{itemize}
That is, $x^{q+1}\in S$ for all cases. Thus, for
$b=\frac{u+u^{q^3}}{(u+u^q)^{q^2-q+1}}$ with $u\in \GF{4}\setminus
\GF{2}$, there are two solutions
$\{\frac{(u+\varepsilon)^{q+1}}{u+u^q},
\frac{(u+1+\varepsilon)^{q+1}}{u+u^q}\}$ with $\omega=\omega'$.

It remains  to prove that for these values of $b$ there are no
solutions with $\omega \neq \omega'$. This will require more
discussion left to the reader.

\section{Conclusion}
In this paper, we have provided a direct and simpler proof  of the APN-ness of  Kasami Functions.
This solves an open question raised by the first author at the conference WAIFI 2014, which remained unanswered during six years.


\begin{thebibliography}{10}

\bibitem{CC-BD94}   Beth; T. and Ding, C. On almost perfect nonlinear 
                 permutations.  Proceedings of Eurocrypt' 93, Lecture Notes in Computer Science
765, pp. 65-76, 1994.

\bibitem{Cbook} Carlet, C. Vectorial Boolean Functions for
 Cryptography, Chapter of the monography { Boolean Models and Methods in Mathematics, Computer Science, and Engineering}, Crama, Y., Hammer, P., (eds), pp. 398-469, Cambridge University Press, 2010.
 

\bibitem{C14}
Carlet, C. Open Questions on nonlinearity and on APN Functions.
 Ko\c{c}, \c{C}, Mesnager, S., Savas, E. (Eds.): WAIFI 2014, LNCS 9061, 83--107, Springer, 2015.
 
 \bibitem{CC-C-D} Carlet, C. and Ding, C. Highly Nonlinear Mappings. 
 Special 
Issue ``Complexity Issues
in Coding and Cryptography", dedicated 
to Prof. Harald Niederreiter on the occasion of
his 60th birthday, Journal of Complexity 20, pp. 205-244, 2004.

\bibitem{CM94}
 Cohen, S.D. and  Matthews, R.W.:
A class of exceptional polynomials.
\newblock {\em Transactions of the American Mathematical Society}, 345:897 --
  909, 1994.

\bibitem{DD04}
Dillon, J.F. and  Dobbertin, H.: New cyclic difference sets with
singer parameters.
\newblock {\em Finite Fields and Their Applications}, 10(3):342 -- 389, 2004.


\bibitem{D98}
Dobbertin, H. Kasami power functions, permutation polynomials and
cyclic difference sets. In: Proceedings of the NATO-A.S.I.Workshop
Difference sets, sequences and their correlation properties, Bad
Windsheim, Kluwer Verlag, pp. 133--158, 1998.

\bibitem{D99}
Dobbertin, H. Another proof of Kasami's Theorem. Des. Codes Crypt.
17, 177--180, 1999.



\bibitem{HK10}
Helleseth, T. and Kholosha, A. $x^{2^l+1} + x + a$ and related affine
polynomials over $\GF{2^k}$. Crypt. Commun. 2(1), 85--109, 2010.

\bibitem{Janwa-Wilson1998}
Janwa, H. and Wilson, R. Hyperplane sections of Fermat varieties in P3 in char. $2$ and some applications to cyclic codes. Proceedings of AAECC-10 Conference, pp. 139--152, 1998.

\bibitem{KCM19}
Kim, K.H., Choe, J., Mesnager, S.
\newblock Solving $X^{q+1}+X+a=0$ over finite fields
\newblock{\em Cryptology ePrint Archive 2019 and arXiv:1912.12648}.

\bibitem{KM19}
 Kim, K.H., Mesnager, S.
\newblock Solving $x^{2^k+1}+x+a=0$ in $\GF{2^n}$ with
$\gcd(n,k)=1$.
\newblock{\em Cryptology ePrint Archive 2019/307} (To appear in journal Finite Field ans their Applications).



\bibitem{CC-Nyberg} Nyberg, K. Perfect non-linear S-boxes. {\em  Proceedings of EUROCRYPT' 91, Lecture Notes in
Computer Science} 547, pp. 378-386, 1992.


\bibitem{CC-Nyberg-bis}  Nyberg, K. On the construction of 
highly nonlinear permutations.
 Proceedings of EUROCRYPT' 92, Lecture Notes in
Computer Science 658, pp. 92--98, 1993.

\bibitem{CC-Nyberg-ter} Nyberg, K. Differentially uniform mappings for
cryptography.  Proceedings of EUROCRYPT' 93, Lecture Notes  in Computer
Science 765, pp. 55--64, 1994.

\bibitem{CC-NK}  Nyberg, K. and Knudsen, L. R. Provable security against differential cryptanalysis.
 Proceedings of CRYPT0' 92, Lecture Notes in
Computer Science 740, pp. 566-574, 1993.

\bibitem{CC-Shannon49} Shannon, C.E. \newblock Communication theory of secrecy systems. \newblock {\em Bell system technical journal}, 28, pp. 656-715, 1949.

\end{thebibliography}
\end{document}